\newtheorem{lemma}{Lemma}
\title{Computing Tree Decompositions with FlowCutter: PACE 2017 Submission\footnote{Support by DFG grant WA654/19-2 ``Algorithm Engineering für Graphpartitionierung''}}
\author{Ben Strasser\\
Karlsruhe Institute of Technology (KIT), Karlsruhe, Germany\\
  \texttt{strasser@kit.edu}}
\begin{document}

\maketitle

\begin{abstract}
We describe the algorithm behind our PACE 2017 submission to the heuristic tree decomposition computation track.
It was the only competitor to solve all instances and won a tight second place.
The algorithm was originally developed in the context of accelerating shortest path computation on road graphs using multilevel partitions.
We illustrate how this seemingly unrelated field fits into tree decomposition and parameterized complexity theory.
 \end{abstract}

\section{Introduction}

Tree decompositions are an established graph decomposition methodology.
They can be used to measure how ``close'' a graph is to a tree.
The width of a decomposition indicates how ``tree-like'' a graph is.
We refer to~\cite{bp-aicgc-93},~\cite{b-atgt-93}, and~\cite{b-tsa-07} for an overview of the field.

Tree decompositions are often used in combination with parameterized complexity.
Many NP-hard problems can be solved in linear time on tree graphs.
The corresponding linear time algorithms can often be generalized to tree decompositions.
The generalized algorithms' running times are usually linear in the graph size but super-polynomial in the width of the decomposition.

A lot of theoretical research with in-depth results into this field exists.
However, it is unclear whether these results translate into algorithms that are efficient in practice.
A core issue consists of finding decompositions of small size.
Without an algorithm that is fast in practice and computes tree decompositions with a reasonably small width, no algorithm parameterized in the tree width is usable in practice.

To bridge this gap between theory and practice, the PACE implementation challenge was created in 2016~\cite{dhjkkr-tfpac-16}.
Track A2 focuses on computing tree decompositions of small width within a given amount of time.
Because of its practical relevance, the challenge was repeated in 2017.
In this paper, we describe the FlowCutter submission to the PACE 2017 Track A2 contest.
It was the only submission so solve all test instances and won a close second place.

FlowCutter was originally developed to accelerate shortest path computations on road graphs.
Fortunately, it is applicable in a significantly broader context.
The corresponding PACE 2016 and 2017 submissions demonstrate this.
For a survey of shortest path computation algorithms, we refer to~\cite{bdgmpsww-rptn-16}.
Tree decompositions are rarely used explicitly in this community.
However, very similar concepts are used.

Many shortest path acceleration techniques exist.
They work in two phases: 
In the preprocessing phase, the graph is transformed.
In the query phase, shortest paths are computed using the transformed graph.
The preprocessing phase may be slow.
The query phase should be fast.
The motivation is that the preprocessing only needs to run when the map is updated.
This is assumed to be rare.
For example, in the preprocessing phase one can construct a small tree decomposition.
In the query phase, the decomposition can be used to compute shortest paths.

These shortest path computation algorithms are an example of parameterized complexity.
The algorithms assume that a small tree decomposition is given in the input.
The complexity of the shortest path problem is bounded in terms of the tree width.
The difference to regular fixed parameter tractability theory is that the considered problem is not NP-hard.

Two common shortest path techniques are Contraction Hierarchies (CH) \cite{gssv-erlrn-12} and Multilevel Dijkstra (MLD)~\cite{sww-daola-00,hsw-emlog-08,dgpw-crprn-13}.
None of the original publications mention tree decompositions.
However, very similar concepts are used.
The similarity is a comparatively recent discovery.
We illustrate the relationship in this paper.

A question often asked is what algorithms big companies with mapping services use.
Unfortunately, most companies do not publicly share this information.
A welcome exception is Microsoft.
They stated that the Bing routing service uses MLD~\cite{bingcrp}.
This illustrates that research in this domain has direct practical implications.

The relation between multilevel graph partitions and tree decompositions is not well known as none of the cited survey articles mention it~\cite{bp-aicgc-93,b-atgt-93,b-tsa-07,bdgmpsww-rptn-16}.
The relationship was hinted in~\cite{acdgw-o-16}.
In this paper, we make the connection clearer and more explicit.
There is a one-to-one correspondence between multilevel graph partitions and rooted tree decompositions.
Using this one-to-one relationship, we can also establish a link between CH and MLD via tree decomposition theory.
We believe that this connection is useful also in other contexts than shortest path computations.
It allows to interpret tree decompositions as multilevel graph partitions.

Our PACE 2017 submission uses the multilevel graph view.
It is based upon recursive bisection respectively nested dissection.
It uses FlowCutter~\cite{hs-gbpo-16} to bisect graphs.

\subsection{Outline}

We first define the terminology used in this paper.
Afterwards, we present a high level overview of related shortest path algorithms.
We then present in the next step, how multilevel partitions and tree decompositions relate.
Using the multilevel partition view, we describe our PACE 2017 algorithm.
Finally, we present the PACE 2017 results.

\section{Definitions}

In this section, we start by defining standard tree decomposition and chordal graph concepts. 
Afterwards, we formally define the notion of multilevel partition.

An \emph{edge cut} $C$ of a graph $G=(V,E)$ is a non-empty edge set.
$C$ decomposes $G$ into the connected components of $G\setminus C$.
Similarly, a \emph{node separator} $S$ of a graph $G=(V,E)$ is a non-empty node set. 
$S$ decomposes $G$ into the connected components of $G\setminus S$.
Usually, one requires there to be at least two connected components in $G\setminus S$.
However, we allow for the degenerate case of there only being one connected component.
We say that a cut or separator \emph{separates} two nodes, if they are in different components. 

An undirected graph is \emph{chordal}, if for every cycle $Z$ with at least four nodes, there exists an edge between two nodes of $Z$ that are not adjacent within $Z$.
A \emph{chordal supergraph} $G'$ of a graph $G$ is a supergraph of $G$ that is chordal.
\emph{Triangulated graph} is a synonym for chordal graph, also used in the literature.
An \emph{elimination order} of an undirected graph $G$ is an order $O$ of the nodes of $G$. 
A supergraph $G'$ is obtained by iteratively contracting the nodes of $O$, i.e., removing a node and adding a clique among its neighbors.
In the shortest path literature, the term \emph{contraction order} is used.
A \emph{perfect elimination order} of a graph $G$ is an order where no node contraction inserts edges.
A graph has a such an order, if and only if, it is chordal~\cite{fg-imig-65}.
We therefore refer to $G'$ as \emph{chordal supergraph} of $G$.

A \emph{tree decomposition} of a graph $G=(V,E)$ is a pair $(B,T)$, where $B$ is the set of \emph{bags} and $T$ is the \emph{tree backbone}.
Every bag $b\in B$ is a set of nodes, i.e., $b\subseteq V$. 
$T$ is a tree where the bags are the nodes, i.e., $B$ is the set of nodes of $T$.
A tree decomposition must fulfill three criteria to be valid:
\begin{enumerate}
\item Every node is in a bag, i.e., $\bigcup_{b\in B} b = V$.
\item For every edge $\{x,y\}$ of $G$, there must be a bag $b\in B$ such that both end points are in $b$, i.e., $x\in b$ and $y \in b$.
\item For every node $x$, the subgraph of the tree backbone $T$ induced by all bags that contain $x$ is a tree.
\end{enumerate}
A \emph{rooted tree decomposition} is a tree decomposition, where the tree backbone is directed towards a root bag~$r$.
A rooted tree decompositions is illustrated in Figure~\ref{fig:rooted-td}.
The \emph{width} of a tree decomposition is the maximum size of a bag plus one.
The tree width $\textrm{tw}$ of a graph is the minimum width over all tree decompositions.

\begin{figure}
\begin{center}
\begin{subfigure}[b]{.3\linewidth}
\includegraphics{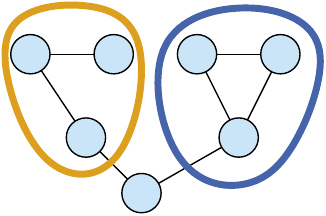}
\caption{No touch}
\end{subfigure}
~~
\begin{subfigure}[b]{.3\linewidth}
\includegraphics{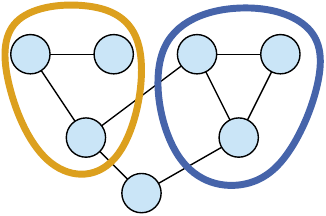}
\caption{Touch}
\end{subfigure}
~~
\begin{subfigure}[b]{.3\linewidth}
\includegraphics{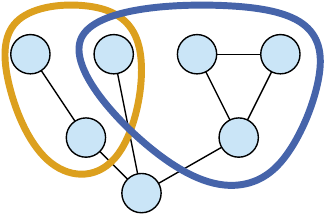}
\caption{Also Touch}
\end{subfigure}
\end{center}
\caption{Graph with two cells illustrating the touch definition.}
\label{td:fig:touch}
\end{figure}

A \emph{cell} $c$ of a graph $G=(V,E)$ is a node subset $c\subseteq V$.
Two cells $a$ and $b$ \emph{touch}, if there is an edge with endpoints in $a$ and $b$ or when $a$ and $b$ share a node.
The touching relationship is illustrated in Figure~\ref{td:fig:touch}.
The \emph{boundary} $b$ of $c$ is the set of nodes adjacent to a node in $c$ but not in $c$.

A \emph{multilevel partition} $P$ is a set of cells.
Two criteria must be fulfilled for $P$ to be valid:
\begin{enumerate}
\item $V$ is a cell. We refer to $V$ as the \emph{toplevel cell}.
\item Touching implies nesting. If two cells $a$ and $b$ touch then $a\subseteq b$ or $b\subseteq a$.
\end{enumerate}
$p$ is a \emph{parent cell} of $c$ if $c\subset p$ and no other cell $q$ exists with $c\subset q \subset p$.
Analogous, $c$ is a child of $p$.
In a \emph{bilevel partition}, only the toplevel cell has children.
A bilevel partition is illustrated in Figure~\ref{fig:bilevel} and a multilevel partition in Figure~\ref{fig:multilevel}.

The term ``multilevel partition'' is often used in experimental algorithms papers \cite{bmsss-ragp-13} but rarely formally defined.
It usually describes an informal algorithm design pattern. 
The exact details therefore significantly vary between application-focused papers.
However, there is a significant difference between our formalization and many other papers.
We separate cells using node separators.
Many application-focused papers use edge cuts.
We use separators as it enables a tighter coupling with tree decompositions.

\section{Application: Shortest paths in Road graphs}

Multilevel Dijkstra (MLD) and Contraction Hierarchies (CH) are two very successful shortest path acceleration techniques.
In this section, we briefly outline the algorithms.

\subsection{Multilevel Dijkstra}

\begin{figure}
\begin{center}
\begin{subfigure}[c]{.3\linewidth}
\includegraphics[width=\linewidth,page=1]{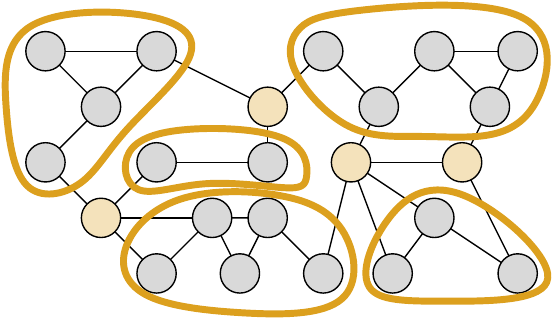}
\caption{Input graph with partition\\(top level cell not depicted)}\label{fig:bilevel}
\end{subfigure}~%
\begin{subfigure}[c]{.3\linewidth}
\includegraphics[width=\linewidth,page=2]{mld}
\caption{Boundary nodes with overlay cliques}\label{fig:overlay}
\end{subfigure}~%
\begin{subfigure}[c]{.3\linewidth}
\includegraphics[width=\linewidth,page=3]{mld}
\caption{Graph search for query from node $s$ to node $t$}\label{fig:mld-query}
\end{subfigure}
\end{center}
\caption{MLD example on bilevel partition.}
\end{figure}

The core idea of Multilevel Dijkstra (MLD) was introduced in~\cite{sww-daola-00} using a bilevel partition.
It was extended to multiple levels in~\cite{hsw-emlog-08}.
The algorithm was refined and popularized by~\cite{dgpw-crprn-13}.
A parameterized analysis of the algorithm using tree width is provided by~$\cite{acdgw-o-16}$.
We describe the initial bilevel algorithm as it is the simplest.
This technique is also often called multilevel overlay graph (MLO) or CRP which references the title of~\cite{dgpw-crprn-13}.

The first step of the preprocessing consists of computing a set of non-touching cells.
This is illustrated in Figure~\ref{fig:bilevel}.
In the second step, an \emph{overlay} is computed for every cell $c$. 
Denote by $G_c$ the subgraph induced by $c$ and its boundary.
Further, denote by $H_c$ the overlay graph.
$H_c$ is substitution for $G_c$ that maintains the shortest path distances among the boundary nodes.
$H_c$ must contain at least the boundary nodes.
In theory, $H_c$ can be very complex graphs.
However, usually a clique among the boundary nodes is used.
The weights of the clique edges are computed by running Dijkstra's algorithm from every boundary node restricted to $G_c$.
Figure~\ref{fig:overlay} illustrates the overlays.
The query consists of running a bidirectional variant of Dijkstra's algorithm.
It explores the cells of $s$ and $t$ and replaces all other cells with their overlays.
Figure~\ref{fig:mld-query} illustrates the explored subgraph.

The idea of this algorithm can easily be applied recursively.
The extension uses a multilevel partition as input. 
In $\cite{acdgw-o-16}$, the MLD distance query running time with clique overlays was bounded by $O(\textrm{tw}^2 (\log n) (\log(\textrm{tw} \log n)))$.

\subsection{Contraction Hierarchies}

The algorithm was originally described in~\cite{gssv-erlrn-12}.
A newer variant of the algorithm is called Customizable Contraction Hierarchies (CCH)~\cite{dsw-cch-15}.
We outline the later variant, because it is tighter coupled with tree decompositions.

The preprocessing consists of several steps.
In the first step, a contraction order $O$ is computed using nested dissection.
In the second step, the nodes are contracted in this order.
The inserted edges are called \emph{shortcuts}.
The graph plus the shortcuts is the CCH.
In different contexts, the contraction order is called elimination order, the shortcuts are the \emph{fill-in}, and the CCH is the chordal supergraph.

The order $O$ is usually interpreted as ordering the nodes from bottom to top.
The top node is the last node of $O$ and contracted last.
A node $x$ is \emph{lower} than $y$, if $x$ comes before $y$ in $O$.
Denote by $w(x,y)$ the weight of edge $\{x,y\}$.
The objective is to enforce the \emph{lower triangle inequality}.
For every triangle $\{x,y,z\}$ in the CCH, it requires that $w(x,z)+w(z,y) \ge w(x,y)$, where $z$ is lower than $x$ and $y$.
The lower triangle inequality is enforced using a triangle listing algorithm.
All edges already present in the input graph are assigned their input weight.
Shortcuts are assigned $\infty$.
Triangles $\{x,y,z\}$ are enumerated ordered by the position of the lowest node $z$ in $O$.
If $w(x,z)+w(z,y)< w(x,y)$, then $w(x,y)$ is set to $w(x,z)+w(z,y)$.

A query for a $st$-path consists of a bidirectional graph search from $s$ and $t$.
From a node $x$ both searches only follow edges to nodes higher than $x$.
Once both searches are terminated, a shortest \emph{up-down} $st$-path $P$ is found.
$P$ is a path $v_1{\rightsquigarrow} v_i {\rightsquigarrow} v_k$ such that the nodes between $v_1$ and $v_i$ are ascending according to $O$.
The nodes from $v_i$ to $v_k$ are descending according to $O$.
The query is correct, if there always exists an up-down path that is as long as a shortest path.
To show this consider a shortest path $P$. 
If $P$ is an up-down path, we are done.
Otherwise, there exists a subpath $x{\rightarrow} z {\rightarrow} y$ such that $z$ is lower than $x$ and $y$.
As $z$ is contracted before $x$ and $y$, there exists a shortcut from $x$ to $y$.
Further, because of the lower triangle inequality, the shortcut's weight is as long as the path $x{\rightarrow} z {\rightarrow} y$.
We can thus remove $z$ from $P$.
Either $P$ is now an up-down path or we apply the argument iteratively.
As we eventually always end up with a shortest up-down path, the query is correct.

The maximum cliques in a CCH/chordal supergraph correspond to the tree decomposition.
It was shown in~\cite{bcrw-s-16} that the number of nodes visited by one side is bounded by $O(\textrm{tw} \log n)$.
As the explored subgraphs can be dense, the distance query runs in $O(\textrm{tw}^2 \log^2 n)$ time.

\section{Connection between Tree Decompositions and Multilevel Partitions}

In this section, we prove a one-to-one correspondence between multilevel partitions and rooted tree decompositions.
A consequence is that both concepts are just two different views onto the same object.

We first describe the algorithm to translate a multilevel partition into a rooted tree decomposition.
In the next step, we describe the algorithm to perform the inverse operation.
Finally, we show that the outputs of both algorithms are valid and that chaining both algorithms is the identity function.

\subsection{From Multilevel Partition to Rooted Tree Decomposition.}

For every cell $c$ in the multilevel partition, we construct a bag $b$ in a tree decomposition.
$b$ is the union of the boundary and interior nodes of $c$ minus the interior nodes of all children.
The parent-child relation between cells induces a tree on the bags. 
This is the tree backbone.
The top level cell is the root of the rooted tree backbone.
The so obtained tree decomposition can be degenerate, i.e., it is possible that bags exist that are subsets of other bags. 

\subsection{From Tree Decomposition to Multilevel Partition}

A tree decomposition $T$ does not uniquely define a multilevel partition $P$ because the tree backbone does not have a root.
The transformation from $T$ must therefore start by picking a root bag $r$.
With respect to $r$, we can construct for every bag $b$ except the root a cell as follows:
Denote by $p$ the parent of $b$, i.e., the first node on the unique path from $b$ to $r$. 
We set the boundary of cell $c$ to $b\cap p$.
The interior of $c$ is set to the union of all direct or indirect children bags of $c$ minus $c$'s boundary.
We additionally construct a cell for the root bag.
This cell's boundary is empty and its interior is the whole graph.

\subsection{Example}

\begin{table}
\begin{center}
\begin{tabular}{ccc}
\toprule
Cell Interior & Cell Boundary & Corresponding Bag \\
\midrule 
a                                                    & b, f        & a, b, f\\
a, b                                                 & f, c        & b, f, c\\
n                                                    & m, o        & n, o, m\\
n, o                                                 & m, p        & o, m, p\\
m, o, n                                              & f, p        & f, m, p\\
g                                                    & f, c, p     & f, g, p, c\\
a, b, f, g, m, n, o                                  & p, c        & f, p, c\\
j, k, l                                              & r, i        & j, r, i\\
k, l                                                 & j           & k, l, j\\
q                                                    & p, i, r     & q, r, p, i\\
d                                                    & c, e        & d, e, c\\
h                                                    & c, e, p     & h, e, p, c\\
d, e, h                                              & c, p, i     & e, i, p, c\\
q, r, j, k, l                                        & p, i        & p, r, i\\
d, e, h, i, q, r, j, k, l                            & p, c        & p, c, i\\
a, b, c, d, e, f, g, h, i, j, k, l, m, n, o, p, q, r & $\emptyset$ & p, c\\
\bottomrule
\end{tabular}
\end{center}
\caption{Interior, boundary, and bags of the cells of the multilevel partition of Figure~\ref{td:fig:multilevel}.}
\label{td:tag:multilevel} 
\end{table}

\begin{figure}
\begin{center}
\begin{subfigure}[c]{.45\linewidth}
\includegraphics[width=\linewidth]{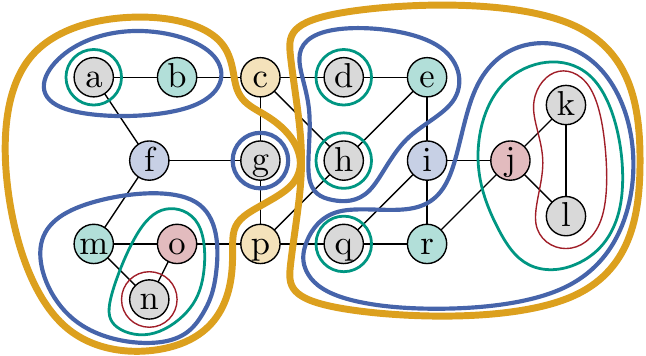}
\caption{Multilevel Partition \\ (top level cell not depicted)}\label{fig:multilevel}
\end{subfigure}
~
\begin{subfigure}[c]{.45\linewidth}
\includegraphics[width=\linewidth]{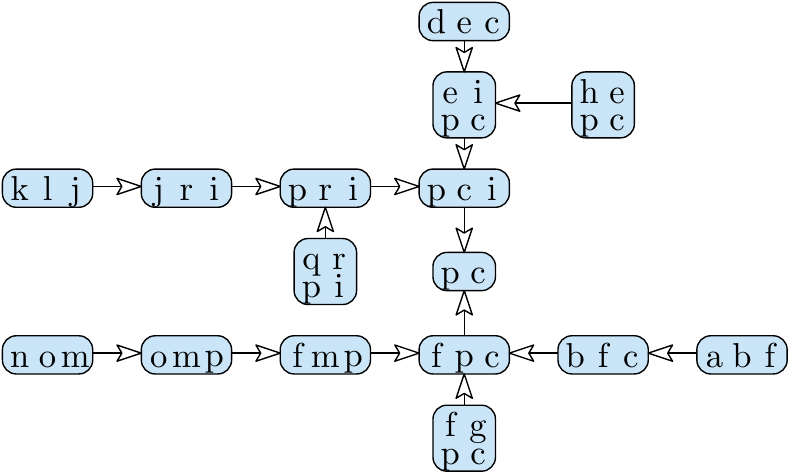}
\caption{Rooted Tree Decomposition}\label{fig:rooted-td}
\end{subfigure}
\end{center}
\caption{Multilevel Partition and corresponding rooted tree decomposition.}
\label{td:fig:multilevel}
\end{figure}

Figure~\ref{fig:multilevel} depicts a multilevel partition with 16 cells.
Every cell (except the top level cell) is depicted as closed curve. 
The color of a curve indicates the recursion depth. Orange indicates depth 1, blue depth 2, green has depth 3, and red indicates depth 4.
Grey nodes are not part of any separator. The color of the remaining nodes indicates the depth of the separator that they are part of.
Table~\ref{td:tag:multilevel} enumerates all cells in the multilevel partition and the derived bags.
Together with the parent-child relation of the cells, we obtain the rooted tree decomposition depicted in Figure~\ref{fig:rooted-td}.

There are cells, such as the cell with interior $\{a,b\}$, which is ``divided'' along the separator $\{b\}$ into one part, namely the cell with interior $\{a\}$.
Sufficiently large cells are usually divided into more than one part. 
However, for tiny cells, it often occurs that such awkward separators are used.

\begin{wrapfigure}[15]{o}{7cm}

\begin{center}
\vspace{-2em}
\includegraphics[scale=0.18]{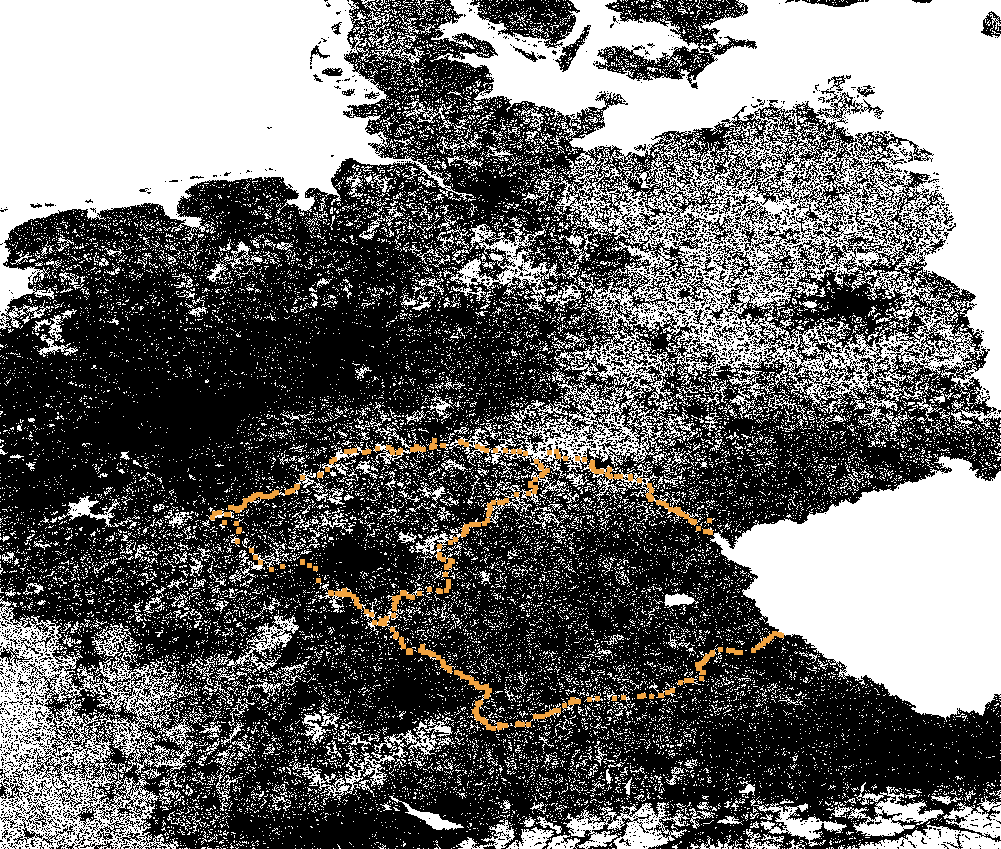}
\end{center}

\caption{Largest Bag in Road Graph Tree Decompositon.}
\label{fig:road-bag}
\end{wrapfigure}

Figure~\ref{fig:road-bag} depicts the nodes in a mostly German road graph at their geographical positions.
Every dot is a node.
The nodes in the largest bag of a tree decomposition are highlighted.
This bag has 369 nodes.
The Figure depicts an area surrounded by orange dots.
This area is divided into two parts along separator consisting of further orange dots.

The nodes in the area are the nodes in the cell.
The orange separator nodes are also in the cell.
However, the orange nodes that surrounded the area are not inside the cell.
They are in its boundary.

\subsection{Correctness}

\begin{lemma}
The constructed tree decomposition is valid.
\end{lemma}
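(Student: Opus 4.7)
The plan is to verify the three defining properties of a tree decomposition in turn. The workhorse is the following consequence of the touching-implies-nesting axiom: for every node $v\in V$ the cells containing $v$ pairwise touch (they share $v$), hence pairwise nest, so they form a chain with a unique smallest element $c(v)$; moreover this chain coincides with the ancestor chain of $c(v)$ in the tree backbone, since every strict superset of $c(v)$ automatically contains $v$. In particular $v$ lies in $c(v)$ but in no child of $c(v)$ by minimality, so $v$ survives the subtraction in the definition of the bag and appears in $b_{c(v)}$; this immediately establishes condition (1).

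For condition (2), given an edge $\{x,y\}\in E$, the cells $c(x)$ and $c(y)$ touch via this very edge and hence nest, so without loss of generality $c(x)\subseteq c(y)$. I claim $b_{c(x)}$ contains both endpoints. That $x\in b_{c(x)}$ is the observation above. If $c(x)=c(y)$ then $y\in c(x)$ and by minimality $y$ is in no child of $c(x)$. Otherwise $c(x)\subsetneq c(y)$ forces $y\notin c(x)$; but $y$ is adjacent to $x\in c(x)$, so $y$ lies in the boundary of $c(x)$, and being outside $c(x)$ it cannot belong to any child, so once more $y\in b_{c(x)}$.

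For condition (3), I first characterize the bags containing a given $v$. By definition $v\in b_c$ iff either $v\in c$ and $v$ is in no child of $c$ (which by the chain description above forces $c=c(v)$), or $v\notin c$ and $v$ lies in the boundary of $c$. In the latter case $c$ touches $c(v)$ through the edge witnessing that $v$ is a boundary node of $c$, so they nest; since $v\in c(v)\setminus c$ we must have $c\subsetneq c(v)$, meaning every ``boundary'' appearance of $v$ occurs in a strict descendant of $c(v)$ in the tree backbone. To show the bags containing $v$ induce a subtree, it then suffices to verify that every intermediate cell $c''$ on the unique upward tree-path from such a descendant $c$ to $c(v)$ still satisfies $v\in b_{c''}$: the neighbour $u\in c$ witnessing that $v$ is a boundary node of $c$ also lies in $c''$ because $c\subseteq c''$, while $v\notin c''$ by minimality of $c(v)$, so $v$ is again a boundary node of $c''$ and hence $v\in b_{c''}$. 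This condition (3) is the main obstacle, because one must carefully separate the two mechanisms by which $v$ can enter a bag and then propagate the boundary witness along the entire ancestor chain up to $c(v)$; everywhere else, touching-implies-nesting applied to the cells singled out by the relevant node or edge endpoints does the work almost immediately.
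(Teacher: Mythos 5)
Your proof is correct and follows essentially the same route as the paper's: identify the unique minimal cell $c(v)$ containing a given node via touching-implies-nesting, use it to establish conditions (1) and (2), and for condition (3) walk the parent chain from any cell whose bag holds $v$ up to $c(v)$. Your treatment of condition (3) is somewhat more explicit than the paper's (you characterize exactly when $v\in b_c$ and carry the boundary-witness $u$ along the entire ancestor path), but the underlying argument is the same.
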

\begin{proof}
We need to show that the three conditions laid out in the tree decomposition definition are fulfilled.
We need to show that every node is in a bag.
To prove this, we observe that every node is interior to the top level cell.
A node $v$ interior to a cell $c$ is either in $v$'s bag or interior to a child of $c$.
As every cell has a finite number of descendants, we cannot build infinite chains of nested cells.
We have thus proven that every node is in a bag.

Further, we need to show that for every edge $\{x,y\}$ there exists a bag $b$ such that $x$ and $y$ are part of $b$.
As touching cells are ordered by inclusion and as there are only finitely many cells, we know that there exists a smallest cell $c_x$ that has $x$ in its interior.
$x$ is in the bag of $c_x$ because $x$ is in $c_x$ but not in the interior of a child of $c_x$.
Let $c_y$ be the analogous smallest cell for $y$.
If $c_x = c_y$, then $c_x$ is the required $b$.
Otherwise, we observe that the existence of $\{x,y\}$ implies that $c_x$ and $c_y$ touch each other.
They are therefore ordered by inclusion.
Assume without loos of generality that $c_x \subseteq c_y$.
The existence of $\{x,y\}$ implies that $y$ is on the boundary of $c_x$ and therefore in the bag of $c_x$.
$c_x$ is therefore the required bag $b$.

Finally, we need to show that for every node $x$ the set of bags that include $x$ forms a subtree of the backbone.
Consider again the smallest cell $c_x$ that contains $x$.
All cells that contain $x$ are ancestors of $c_x$.
As $x$ is in $c_x$, $x$ cannot be in the bag of any ancestor of $c_x$.
We therefore know that $c_x$ is the only cell that has $x$ in its interior and bag.
Pick another cell $d$ whose bag contains $x$.
As $d \neq c_x$, we know that $x$ is on $d$'s boundary.
Denote by $p$ the parent cell of $d$.
$x$ is in $p$ or on the boundary of $p$.
We can thus conclude that $x$ is in the bag of $p$.
From $d$ we can iteratively follow the parent relation.
As the parent-child relation is acyclic, we eventually arrive at $c_x$.
As for every $d$ the corresponding path ends at $c_x$, we have proven that the set of bags that include $x$ forms a subtree of the backbone.

As we have proven all three properties, we have proven that the constructed tree decomposition is valid.
\end{proof}

\begin{lemma}
The constructed multilevel partition is valid.
\end{lemma}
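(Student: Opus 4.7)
The plan is to verify the two axioms of a multilevel partition for the cells the construction produces: (i) that $V$ is a cell, and (ii) that whenever two cells touch they are nested. Axiom (i) is immediate because the root bag is explicitly sent to a cell whose interior is $V$.

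For (ii) the main technical device will be the \emph{topmost bag} $t_x$ of each vertex $x$: the bag closest to $r$ among those containing $x$. Existence and uniqueness of $t_x$ follow from the subtree axiom of tree decompositions, which forces the set $B_x$ of bags containing $x$ to be a connected subtree of the backbone with a unique rootmost element. Writing $T_b$ for the subtree of the backbone rooted at $b$, I would first establish the characterization
\[
  x \in c(b) \iff t_x \in T_b
\]
for every non-root bag $b$. One direction is easy: if $t_x \in T_b$, then $x \in t_x$ places $x$ in a bag of $T_b$, and $x \notin p$ (the parent of $b$) because otherwise $p$ would be a strict ancestor of $t_x$ within $B_x$, violating topmostness. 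The other direction argues by contrapositive: if $x$ lies in some bag of $T_b$ but $t_x$ is a strict ancestor of $b$, the connectivity of $B_x$ forces the entire path from $t_x$ down through $p$ and $b$ into $T_b$ to lie in $B_x$, placing $x \in b \cap p$ and so excluding $x$ from the cell.

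With the characterization in hand, axiom (ii) reduces to a short tree argument. If cells $c(b_1), c(b_2)$ share a node $x$, then $t_x \in T_{b_1} \cap T_{b_2}$, forcing $b_1$ and $b_2$ to be ancestors of $t_x$; since the ancestors of a fixed node form a chain, one of $T_{b_1}, T_{b_2}$ contains the other, and the characterization delivers the nesting of the cells. If instead the cells touch through an edge $\{x,y\}$, the edge-covering axiom supplies a bag $b^\star$ containing both endpoints, so $t_x$ and $t_y$ are both ancestors of $b^\star$ and hence comparable; combining this with $b_1$ being an ancestor of $t_x$ and $b_2$ of $t_y$ puts all four bags on a single root-to-$b^\star$ chain, so $b_1, b_2$ are comparable and the cells again nest. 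The root cell $c(r) = V$ is trivially nested with every other cell. I expect the main obstacle to be the substantive direction of the characterization, which requires a careful connectivity argument inside $B_x$; once this is in place, the case analysis for (ii) is essentially the tree fact that two ancestors of a common node are comparable.
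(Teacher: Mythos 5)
Your proof is correct, and its core device is the same as the paper's: for each vertex $x$, identify the rootmost bag $t_x$ among those containing $x$, and show that a non-root cell $c(b)$ contains $x$ exactly when $t_x$ lies in the subtree $T_b$ rooted at $b$ (equivalently, when $b$ is an ancestor of $t_x$). The paper proves this inline for the shared-node case by splitting bags into the three categories (inside $T_{t_x}$ but not $t_x$, on the $t_x$--$r$ path, neither), which is the same content as your characterization; for that case the two arguments coincide. Where you genuinely diverge is the edge case. The paper builds a closed walk $r \leadsto b_x \leadsto b_{xy} \leadsto b_y \leadsto r$ and argues from degeneracy of cycles in a tree, under the ``convention'' $x\in b_x$, $y\in b_y$ --- an assumption that is not automatic when $x$ merely lies in the \emph{cell} of $b_x$ rather than the bag. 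You instead re-apply the characterization to get $b_1$ an ancestor of $t_x$ and $b_2$ an ancestor of $t_y$, then use the edge-covering bag $b^\star$ to put $b_1,b_2,t_x,t_y$ on a single root-to-$b^\star$ chain. This is tighter, avoids the implicit extra assumption, and unifies both cases under one lemma, at no cost in length. One tiny omission you should make explicit: in the contrapositive direction of the characterization, first dispose of the sub-case where $x$ appears in no bag of $T_b$ at all (then $x\notin \bigcup_{b'\in T_b} b'$, so trivially $x\notin c(b)$), before invoking connectivity of $B_x$ in the remaining sub-case.
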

\begin{proof}
We need to show that touching cells are ordered by inclusion. 
Cells touch if one of two conditions is fulfilled:
\begin{itemize}
\item They share a node.
\item There exists an edge with endpoints in both cells.
\end{itemize}
We show the required ordering property independently for both cases.

Consider some arbitrary node $x$ and denote by $T_x$ the subtree of the backbone induced by $x$.
$T_x$ contains a unique bag $b_x$ that is closest to the root.
The bags in the tree decomposition fall into three categories:
\begin{enumerate}
\item They are in $T_x$ but are different from $b_x$.
\item They lie on the unique path from $b_x$ to the root.
\item They are neither in $T_x$ nor on the path.
\end{enumerate}
We show that only the cells corresponding to the bags of category 2 contain $x$.
The cells along a path are trivially ordered by inclusion.
 
Let $q$ be a bag from category 1.
As it is different from $b_x$, it cannot be the root.
Therefore, there exists a parent bag $p$ of $q$.
By construction $x$ is also contained in $p$.
We have that $x\in q\cap p$.
$x$ is in the boundary and not in the interior of the cell corresponding to $q$.

Now let $q$ be a bag from category 3.
The corresponding cell is constructed by forming the union of bags that do not contain $x$.
The union thus also does not contain $x$.

Finally, let $q$ denote a bag on the path.
The constructed union contains all bags of $T_x$ and thus also contains $x$.
It remains to show that $x$ is not on the boundary of the corresponding cell.
This follows from the fact that $b_x$ is the only bag that contains $x$.
As $b_x$ is the bag the farthest away from the root, no parent bag of a bag on the path is $b_x$.
$x$ is thus not part of any boundary.

This completes the first part of the proof.
Next consider the case where the cells corresponding to two bags $b_x$ and $b_y$ touch because there exists an edge $\{x,y\}$ between them.
By convention, we set $x\in b_x$ and $y\in b_y$.
We know from the second property of the tree decomposition definition that there exists a bag $b_{xy}$ that contains $x$ and $y$.

We know that the tree backbone must contain the following four paths:
\begin{itemize}
\item There is a path $D_x$ from $b_{xy}$ to $b_x$ along $T_x$ because trees are connected. 
\item Using an analogous argument, we know that there exists a path $D_y$ from $b_{xy}$ to $b_y$ along $T_y$.
\item We can follow the parent relation from $b_x$ to the root and obtain a path $U_x$.
\item Analogously, there exists a path $U_y$ from $b_y$ to the root.
\end{itemize}
By concatenating all four paths $U_x$, $U_y$, $D_y$, and $D_x$, we obtain a cycle in the tree backbone.
As the tree backbone is a tree, we conclude that the cycle must be degenerate.
The set of the root $r$, $b_x$, $b_y$, and $b_{xy}$ can therefore only contain at most two elements.
We conclude that one of the following conditions must hold as otherwise the set would contain three or more elements:
$b_x = b_y$, $b_x = r$, or $b_y = r$.
In the first case, the cells are equal and thus ordered.
In the second and third cases, one of the cells is the root cell and thus by definition a superset of the other cell.

This completes the second and last part of the proof.
We have proven that the constructed multilevel partition is valid.
\end{proof}

\begin{lemma}
Applying both algorithms after another is the identity.
\end{lemma}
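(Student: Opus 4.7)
The plan is to prove the two compositions are identities separately. For the forward round trip, start with a multilevel partition $P$, apply the first algorithm to obtain a rooted tree decomposition $T$, and apply the second algorithm to obtain a partition $P'$; show $P = P'$. For the reverse round trip, do the analogous thing starting from a rooted tree decomposition. In both cases the construction is cell-by-cell (or bag-by-bag), so the parent-child structure is preserved on the nose by the two algorithms; the actual work is to show that the boundary and interior of each cell, or equivalently the contents of each bag, are reconstructed faithfully.

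For the forward direction I would first establish the following auxiliary identity by induction on the cell tree, processing cells from the leaves upward: for every cell $c$ in $P$ with bag $b_c$,
\[
\bigcup_{d \in \{c\}\cup\mathrm{descendants}(c)} b_d \;=\; \mathrm{int}(c)\cup\mathrm{bd}(c).
\]
The leaf case is immediate since a childless cell's bag is exactly $\mathrm{int}(c)\cup\mathrm{bd}(c)$; the inductive step uses that each child's boundary is contained in $\mathrm{int}(c)\cup\mathrm{bd}(c)$, so the union telescopes. With this identity, applying the second algorithm to $T$ recovers, for each non-root cell $c$, the interior by subtracting the new boundary from the above union, and recovers the boundary as $b_c\cap b_{\mathrm{parent}(c)}$. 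Checking that $b_c\cap b_{\mathrm{parent}(c)}=\mathrm{bd}(c)$ is a direct computation from the definitions of $b_c$ and $b_{\mathrm{parent}(c)}$, using that nodes in $\mathrm{bd}(c)$ lie in the parent cell's interior (hence in its bag) while interior nodes of $c$ that survive in $b_c$ are not in the parent's bag.

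For the reverse direction I would work bag-by-bag. Fix a bag $b$ of $T$ with associated cell $c_b$. I want to show the bag reconstructed from $c_b$ by the first algorithm, namely $\mathrm{bd}(c_b)\cup\mathrm{int}(c_b)\setminus\bigcup_{c'\text{ child of }c_b}\mathrm{int}(c')$, equals $b$. For the forward inclusion, split a node $x\in b$ into two cases depending on whether $x$ also lies in the parent bag: if yes, $x\in\mathrm{bd}(c_b)$; if no, then $b$ is the top of the subtree $T_x$, so $x$ lies in the descendant-union defining $\mathrm{int}(c_b)$, and the subtree property of $T$ forces $x$ not to lie in any child bag's ``new'' interior (because $x$ would then belong to the corresponding child boundary). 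For the reverse inclusion I would argue by contradiction, using again the subtree connectivity of $T_x$ to derive that any $x$ in the reconstructed bag but not in $b$ would have to inhabit some child cell's interior.

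The main obstacle I expect is keeping the root and the degenerate configurations straight: the root cell has empty boundary and interior equal to all of $V$, and the first algorithm may create bags that are subsets of their parents or equal to sibling bags, which must nonetheless be carried around as distinct nodes of the tree backbone so that the bijection at the level of cells (not just at the level of vertex sets) is respected. A careful bookkeeping convention identifying cells/bags by their tree position rather than by their contents should neutralize this, but the argument for the identity $b_c\cap b_{\mathrm{parent}(c)}=\mathrm{bd}(c)$ still requires checking that separator nodes kept inside $b_c$ (rather than pushed into a child) do not accidentally reappear in the parent's bag in a way that would enlarge the recovered boundary.
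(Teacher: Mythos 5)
Your proposal is correct and for the forward round trip it uses the same key observation as the paper (the union of bags over the subtree rooted at $b$ equals $\mathrm{int}(c)\cup\mathrm{bd}(c)$, so subtracting the boundary recovers the interior). However, you are more thorough than the paper in two respects, and it is worth knowing the discrepancy. First, the paper's proof establishes only the direction multilevel~partition~$\to$~tree~decomposition~$\to$~multilevel~partition; it never checks the composition in the other order, which you handle bag-by-bag. Strictly speaking, showing $g\circ f=\mathrm{id}$ alone does not make $f$ and $g$ mutual inverses, so your extra work is needed for the ``one-to-one correspondence'' claim the section advertises. Second, the paper silently treats ``the boundary'' removed in the reverse construction as if it were the original $\mathrm{bd}(c)$, whereas the reverse algorithm actually removes $b\cap p$; you explicitly flag and propose to verify the identity $b_c\cap b_{\mathrm{parent}(c)}=\mathrm{bd}(c)$, filling a small gap the paper elides. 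The cost of your approach is only length; the payoff is that the round trip is genuinely verified in both directions and the boundary bookkeeping, including the root/degenerate cases you rightly worry about, is made explicit rather than implicit.
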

\begin{proof}
As both algorithms essentially just copy the tree backbone, it is clear that it is not modified.
It remains to show that the contents of the bags and cells remains unchanged.

Pick a cell $c$ from the multilevel partition and denote by $b$ the corresponding bag.
$b$ is by construction a subset of the union $c$'s boundary and its interior.
The missing nodes must be in the interior of a child cell of $c$ and therefore in the bag of a descendant of $b$.
The union over the bags of the subtree rooted at $b$ is thus equal to the union of $c$'s boundary and interior.
Removing the boundary from this union yields the interior, as the interior and the boundary are disjoint.
The interior of $c$ is thus left unchanged after applying both algorithms.
Applying both algorithms is therefore the identity.
\end{proof}

\section{PACE 2017 FlowCutter Submission}

Our algorithm is based on the nested dissection paradigm~\cite{g-ndrfe-73}.
In the graph partitioning literature, this approach is often called recursive bisection.
Compared to the original algorithms, ours does not recurse until all graph parts are smaller than a given threshold.
Instead, it aborts early, when it detects that the decomposition width can no longer be improved.

Internally, our algorithm uses the multilevel partition representation described in the previous section.
It translates the data to a tree decomposition when formatting the result.
For every cell $c$, it stores two sets $B_c$ and $I_c$.
Both sets contain nodes of the input graph.
$B_c$ contains the nodes on the boundary of $c$.
$I_c$ contains the interior nodes of $c$ that are not in a child cell.
The bag corresponding to $c$ is $B_c \cup I_c$.
We refer to $|B_c| + |I_c|$ as the bag size of $c$.
The maximum bag size over all cells in a multilevel partition is the corresponding tree decomposition width minus one.

Our algorithm maintains a multilevel partition.
The cells are organizes in two sets $\mathcal{F}$ and $\mathcal{O}$.
$\mathcal{F}$ is the set of final cells.
$\mathcal{O}$ is the set of open cells.
$\mathcal{F}$ contains the cells that can no longer be modified.
The cells in $\mathcal{O}$ can be modified using a cell split operation.
Splitting $c$ yields one final cell and several possibly zero open cells.
No new cell has a bag size larger than the one of $c$.

The set $\mathcal{O}$ is organized as maximum priority queue using the bag size as key.
Our algorithm further maintains the maximum bag size over all final cells.
Initially, the set $\mathcal{F}$ is empty.
We define the maximum bag size over an empty $\mathcal{F}$ as 0.
Initially,  $\mathcal{O}$ contains a single toplevel cell $c$ that contains all nodes.
This means that $B_c = \emptyset$ and $I_c = V$.

Our algorithm iteratively splits cells.
It picks the cell $c$ from $\mathcal{O}$ that maximizes the bag size.
If $c$'s bag size is smaller or equal to the maximum bag size over all final cells, our algorithm terminates.
Otherwise, it removes $c$ from $\mathcal{O}$ and splits $c$.
This yields a final cell $c_f$ and several possible zero open cells $c_1\ldots c_k$.
Our algorithm adds $c_f$ into $\mathcal{F}$ and updates the stored maximum bag size.
The cells $c_1\ldots c_k$ are added into $\mathcal{O}$.

\subsection{Splitting a Cell}

In this subsection, we describe how a cell $c$ is split.
Consider the subgraph $G_c$ induced by $I_c$.
Our algorithm computes a balanced separator $S_c$ using FlowCutter.
The resulting cells are derived from this separator.
The final cell $c_f$ is similar to $c$ except that nodes are removed from $I_c$.
Formally, $B_{c_f} = B_c$ and $I_{c_f} = S_c$.
The resulting open cells correspond to the connected components of $G_c\setminus S_c$.
For every component there is a cell $c_c$.
Our algorithm sets $I_{c_c}$ to the nodes in the corresponding component.
$B_{c_c}$ is a subset of $B_c \cup S_c$.
It only contains the nodes adjacent to a node in $I_{c_c}$.

To test the adjacency, we maintain an array that maps every node onto a bit.
Initially, we set each bit to false.
For all nodes in $I_{c_c}$, we set the bit to true.
Next we iterate over all nodes $x$ in $B_c \cup S_c$.
We iterate over the neighbors of $x$ in the input graph.
If the bit of one neighbor is set, we add $x$ to $B_{c_c}$.
Finally, we reset the bit of all nodes in $I_{c_c}$ to false.

\subsection{FlowCutter}

FlowCutter is a graph bisection algorithm described in~\cite{hs-gbpo-16}.
In this subsection, we present the high level ideas.
For all details, we refer to~\cite{hs-gbpo-16}.
The base algorithm computes balanced edge cuts.
It can be extended to compute balanced node separators.
Our algorithm interprets the graphs as symmetric unit flow network.
It repeatedly solves multi-source multi-target maximum flow problems.
Initially, there is only one source and one target node.
They are picked uniformly at random.
Our algorithm starts by computing a maximum flow.
From this flow, it derives the least balanced minimum cut $C$.
If $C$ is sufficiently balanced, it terminates.
Otherwise, it iteratively improves the cut in rounds.
New source or target nodes are added in each round.
Suppose that $C$ is unbalanced.
The algorithm thus continues with an additional round.
Further, assume without loose of generality that the source's side is smaller than the target's side.
Our algorithm marks all nodes on the source's side as additional sources.
Additionally, it marks one node on the target's side as source node.
This node is called \emph{piercing node}.
It is an endpoint of a cut edge.
Choosing the piercing node is difficult.
If there is a choice that does not increase the cut size, our algorithm picks it.
Otherwise, it employs a heuristic based on the distances from the original source and target nodes.
After adding the additional sources, $C$ no longer separates all source nodes.
In the next round a different cut is found.
Usually, we run several FlowCutter instances in parallel with different initial source and target pairs.
We refer to the number of instances as the number of \emph{cutters}.
In each round FlowCutter finds a new cut. 
During its execution our algorithm thus does not only compute a single balanced cut.
Instead, it computes a set of cuts that heuristically optimize balance and cut size in the Pareto-sense.
This allows us to optimize complex criteria heuristically.
For example, we can pick a cut that minimizes the ratio between the cut size and the smaller side size, i.e., with minimum expansion.
From our experience, minimizing the expansion subject to a bounded balance, heuristically yields good results.
In flow problems, node capacities can usually be reduced to edge capacities~\cite{amo-nf-93}.
Using such a reduction, FlowCutter can be used to compute node separators.
Alternatively, node separators can be derived from edge cuts by picking for every cut edge an endpoint.
The former can achieve smaller separators but is slower as the graph needs to be expanded.

\subsection{Submission Details}

Our submission runs FlowCutter in an endless loop until the maximum execution time of 30 min is reached.
It uses node capacities and minimizes the expansion subject to a bounded balance.
In each iteration, our submission varies some parameters such as the number of cutters, the minimum required balance, or the random seed.
For small instances, we compute heuristic elimination orders before running FlowCutter.
From these orders, we derive tree decompositions.
We compute an order that greedily picks a minimum degree node.
Another order greedily picks a minimum fill-in node.
We compute these orders because the derived tree decompositions are sometimes slightly smaller than those computed by FlowCutter.
However, on large instances computing them using our implementation is not possible within the maximum execution time of 30 min.
On the large instances, we configure FlowCutter to compute edge cuts and minimize the cut size subject to a bounded balance.
From the edge cuts, we derive node separators by picking an endpoint of every edge.
This is necessary to assure that at least one iteration finishes within the allowed execution time.
Our implementation finds a new tree decomposition each time a cell is split.
Formatting the textual output corresponding to the new tree decomposition can be slower than performing the split operation.
To avoid these costs, we produce a new textual output only every 30 s.

\section{Evaluation}

\begin{table}
\begin{center}
\begin{tabular}{lrrrrrr}
\toprule
&\multicolumn{2}{c}{Approx. Ratio}  & \multicolumn{4}{c}{Diff. to best Sol. [\%]}  \\
\cmidrule(lr){2-3}\cmidrule(lr){4-7}
Competitor & Avg. & Max. & $=0$  & $\le1$ & $\le2$ & $<\infty$ \\
\midrule
FlowCutter & \textbf{1.08} & \textbf{1.55} & 42 & 50 & \textbf{56} & \textbf{100} \\
Solver by Abseher, Musliu, Woltran & 1.11 & 2.37 & 25 & 37 & 46 & 95 \\
Solver by Bannach, Berndt, Ehlers & 1.19 & 3.90 & 21 & 29 & 32 & 93 \\
Solver 1 by Jégou, Kanso, Terrioux & 1.30 & 2.43 & 9 & 13 & 17 & 81 \\
Solver by Tamaki, Ohtsuka, Sato, Makii & 1.30 & 9.61 & \textbf{51} & \textbf{52} & 52 & 92 \\
Solver by Larisch, Salfelder & 1.56 & 3.74 & 6 & 8 & 13 & 54 \\
Solver 2 by Jégou, Kanso, Terrioux & 4.68 & 139.70 & 10 & 14 & 18 & 71 \\
\bottomrule
\end{tabular}

\end{center}

\caption{Performance Comparison of all PACE 2017 Track A2 competitors.}
\label{tab:result}
\end{table}

To evaluate our algorithm, we submitted it the PACE 2017 challenge.
We entered track A2.
The objective is to compute a tree decomposition of small size within 30 min.
All competitors were evaluated on the same set of test graphs.
The algorithms that find a smallest decomposition win the instance.
Entries were ranked using the Schulze method.
This usually implies that the competitor that wins the most instances wins the competition. 
Our submission won the second place.
The submission by Tamaki et al. won the competition.
Abseher et al. is ranked third.
The difference between the first two places is small.
In the following, we look at the results in depth.

Denote by $t^*$ the best solution found for an instance and by $t$ the solution found by a competitor.
A weakness of the PACE evaluation method is that is does not consider how large $t-t^*$ is.
$t-t^* = 1$ is weighted in the same way as $t-t^* = 1000$.
Only the relative ranking among competitors matters.
To investigate the impact of this effect, we present in Table~\ref{tab:result} alternative rankings.
We report relative approximation ratios.
Denote by $t^*$ the best solution found for an instance and by $t$ the solution found by a competitor.
We report the average and maximum $t/t^*$ over all non-trivially\footnote{A non-trivial solution must have width at most $|V|-5$. Criterion was defined by PACE organizers.} solved instances.
With respect to both criteria, our submission clearly wins.
Interestingly, the competition winner is only ranked fifth with respect to the average $t/t^*$.
Additionally, we report how often $t-t^*$ is below a threshold.
For instances that were not or only trivially solved, we set $t=\infty$.
``$=0$'' is the number of instances where a competitor produced a best solution.
``$\le X$'' indicates how often it was off by $X$.
``$< \infty$'' is the number of non-trivially solved instances.  

The algorithm by Tamaki et al. clearly wins with respect to ``$=0$''.
The gap with our algorithm is only 2\% for ``$\le 1$''.
For ``$\le 2$'' our algorithm wins.
Further, our algorithm is the only competitor that finds non-tririval solutions on all instances.
The algorithm by Tamaki et al. finds on many small instances solutions that are one or two nodes smaller than our algorithm.
For large instances, their algorithm often does not find a solution.

\section{Conclusion}

We conclude that the winning algorithm by Tamaki et al. is good, if the instances are small and achieving the smallest width is of utmost importance.
If the instances are large or finding nearly the best solutions is good enough, our algorithm is better.
Further, we illustrated that viewing tree decompositions as multilevel partitions is a useful algorithm design tool.

\subparagraph*{Acknowledgements.}

I thank the PACE organizers for providing a good testing infrastructure.
It made the experimental evaluation easier and less error-prone.
I thank Michael Hamann for fruitful discussions.





\end{document}